\newcommand{\B}[1]{\boldsymbol{#1}}
\newtheorem{lemma}{Lemma}
\newtheorem{remark}{Remark}
\patchcmd{\maketitle}{\@fnsymbol}{\@alph}{}{}  % Footnote numbers from symbols to small letters
\title{Mean field at distance one}
\date{\today}
\author[,1,2,3]{Ka Yin Leung \thanks{kayin.leung@math.su.se}}
\author[,2,4]{Mirjam Kretzschmar\thanks{m.e.e.kretzschmar@umcutrecht.nl}}
\author[,1]{Odo Diekmann\thanks{o.diekmann@uu.nl}}
\affil[1]{\footnotesize Mathematical Institute, Utrecht University, Utrecht, The Netherlands}
\affil[2]{\footnotesize Julius Center for Primary Care and Health Sciences, University Medical Center Utrecht, Utrecht, The Netherlands}
\affil[3]{\footnotesize Department of Mathematics, Stockholm University, Stockholm, Sweden}
\affil[4]{\footnotesize National Institute for Public Health and the Environment, Bilthoven, The Netherlands}
\begin{document}
\maketitle
\begin{abstract}
To be able to understand how infectious diseases spread on networks, it is important to understand the network structure itself in the absence of infection. In this text we consider dynamic network models that are inspired by the (static) configuration network. The networks are described by population-level averages such as the fraction of the population with $k$ partners, $k=0,1,2,\ldots$ This means that the bookkeeping contains information about individuals and their partners, but no information about partners of partners. Can we average over the population to obtain information about partners of partners? The answer is `it depends', and this is where the mean field at distance one assumption comes into play. In this text we explain that, yes, we may average over the population (in the right way) in the static network. Moreover, we provide evidence in support of a positive answer for the network model that is dynamic due to partnership changes. If, however, we additionally allow for demographic changes, dependencies between partners arise. In earlier work we used the slogan `mean field at distance one' as a justification of simply ignoring the dependencies. Here we discuss the subtleties that come with the mean field at distance one assumption, especially when demography is involved. Particular attention is given to the accuracy of the approximation in the setting with demography. Next, the mean field at distance one assumption is discussed in the context of an infection superimposed on the network. We end with the conjecture that \deleted{a change in}\added{an extension of} the bookkeeping leads to an exact description of the network structure.
\end{abstract}

\section{Introduction}
Consider a large population of individuals who engage in partnerships. These partnerships make up the network structure of the population. The network evolves over time due to both demographic and partnership changes. Rather than keeping track of all individuals and partnerships over time, we are interested in a statistical description of the network at a particular point in time by characterizing population-level (p-level) quantities of interest, e.g.\ the fraction of the population having $k$ partners, $k=0,1,2,\ldots$, and use these p-level averages to describe the disease dynamics in the population. In general, it is not possible to use such a statistical description to predict the future spread of the disease\added{. Indeed,} \deleted{because} the precise network structure \deleted{matters}\added{influences how the disease is transmitted on the network. A statistical description of p-level fractions generally does not provide enough information to recover the structure of the network.} But\added{,} by making assumptions about the structure of the network, \added{e.g.\ by assuming a (static) configuration network, such a statistical description may be possible.} \deleted{such a description is possible, see our discussion in~\cite{Leung2016}.}

\added{The construction of the (static) configuration network guarantees the absence of degree-degree correlation. As a consequence it is easy to describe the transmission dynamics of an infectious disease across the network in the course of time, even for rather general infectivity functions, see~\cite[Section 2.5]{Leung2016} and~\cite{Barbour2013}. The dynamic network models that we consider in this text (and previous work~\cite{Leung2016}) are inspired by the (static) configuration network.}

\deleted{The dynamic network models that we consider in this text (and previous work) are inspired by the (static) configuration network. The construction of this network guarantees the absence of degree-degree correlation. As a consequence it is easy to describe the transmission dynamics of an infectious disease across the network in the course of time, even for rather general infectivity functions,} \deleted{see}\cite[Section 2.5]{Leung2016} \deleted{and} \cite{Barbour2013}.

An essential feature of the network models under consideration are the `binding sites' (in the static setting these are often referred to as half-edges or stubs; these were most cleverly used to describe transmission dynamics of an SIR infection in the static setting by Volz in~\cite{Volz2008} and in subsequent work by Miller, Volz, and coauthors (e.g.~\cite{Kiss2017} and references therein)): each individual consists of a number of (conditionally) independent binding sites. In the static configuration network, binding sites are paired in a uniform way. Two binding sites that are paired form a partnership between their owners. Note that, while the construction of the network may lead to self-loops and multiple partnerships between the same individuals, the proportions are such that we may ignore these in the infinite population limit, see e.g.~\cite{Durrett2006, vdHofstad2015} for precise statements and proofs. 

We distinguish three different levels in the network: (i) binding sites, (ii) individuals, and (iii) the population. Systematic model formulation relates the three levels to each other. At the binding-site and individual level (i-level) we have a Markov chain description with p-level influences captured by environmental variables. We work in the large population limit, so the description at the p-level is deterministic. The binding sites are the essential building blocks of the model and allow us to understand the dynamics at the p-level. 

\added{In~\cite{Leung2016} we formulated models for the spread of an infectious disease over two dynamic variants of the static configuration network. Here we reconsider these dynamic networks while focusing on the mean field at distance one assumption.} \deleted{In this text we consider two dynamic variants of the static configuration network.} First, in Section~\ref{sec:nodemography}, we allow for partnership formation and separation. Next, in Section~\ref{sec:demography}, \deleted{also demographic turnover is considered} \added{we also incorporate demographic turnover.} \deleted{and the effect that such dynamics has at the p-level.} \deleted{In particular, our focus in this text is on partners of partners. What information about partners of partners can we deduce from a network where our bookkeeping contains only information about individuals and their partners?} \added{The key question, central to the mean field at distance one assumption, is: what information about partners of partners can we recover from a bookkeeping scheme that only contains information about individuals and their partners?}

In the case of a static \added{configuration} network this question is readily answered. There is \added{\emph{independence}} \deleted{independence} in the degrees of partners. Therefore, the probability that a partner has $k$ partners is simply obtained from the size-biased degree distribution, i.e.\ if $P_k$ is the probability that a randomly chosen individual has $k$ partners, then $kP_k/\sum_l lP_l$ is the probability that a randomly chosen partner has $k$ partners (where $\sum_{\added{l}} l P_l$ serves as a normalization constant). We show in Section~\ref{sec:nodemography} that this property is also shared by the dynamic network without demography, but, as we show in Section~\ref{sec:demography}, things are more subtle in the dynamic network with demography. In fact, in Section~\ref{sec:demography}, we show that degree dependencies arise as a result of age dependencies, and we quantify the dependency between the degrees of partners by the correlation coefficient. In Section~\ref{sec:meanfield} we discuss the mean field at distance one assumption in the context of an infectious disease superimposed on the network, and we explain where additional complications arise. Finally, in Section~\ref{sec:conclusion}, we end with a discussion and some conclusions. We conjecture that changing the bookkeeping of partners to include age allows for an exact description of the dynamic network with demography.

\section{Dynamic network without demography}\label{sec:nodemography}
\subsection{Model formulation}
\deleted{As a first dynamic variant of the static configuration network} \added{To obtain a first dynamic configuration network we assume}:
\begin{itemize}
\item occupied binding sites become free at rate $\sigma$
\item free binding sites form partnerships at rate $\rho F$ where $F$ is the fraction of binding sites that are free (mass action/ supply and demand)
\end{itemize}
\added{cf.~\cite[Section 3]{Leung2016}.} \added{Let the partnership capacity $n$ of an individual be the maximum number of partners it \emph{may} have at any given time. The actual number of partners (a.k.a. the degree) of the individual changes over time according to the per-binding-site rules specified in the two bullets above. In principle, the partnership capacity $n$ is a random variable with a specified distribution (with finite first and second moments). Here, for the sake of exposition, we assume that the distribution is concentrated in one point (also denoted by $n$). In other words, all individuals have exactly the same partnership capacity $n$. As a consequence, the degree of an individual follows a binomial distribution, see eq.~\eqref{eq:degree} below.} \deleted{For the sake of exposition we assume that each individual has a partnership capacity $n$, meaning that it has $n$ binding sites. In principle, as with the static configuration network, $n$ is a random variable with a distribution that has, by assumption, a finite second moment. In the dynamic network, the partnership capacity $n$ of an individual denotes the maximum number of partners it \emph{may} have at the same time, but not the \emph{actual} number of partners it has at a given time. The number of partners it has, also called the degree of the individual, instead arises from the network dynamics (for constant $n$ this is simply a binomial distribution, see~\eqref{eq:degree} below).}

The network dynamics \deleted{imply} \added{entail} that binding sites of an individual behave independently of one another in partnership changes. One can describe the dynamics of the fraction $F$ of free binding sites with the following ordinary differential equation (ODE):
\begin{equation*}
\frac{dF}{dt}=-\rho F^2+\sigma(1-F).
\end{equation*}
As a consequence, we find that $F$ stabilizes at a value characterized by the identity
\begin{equation}\label{eq:Fstable1}
F=\frac{\sigma}{\rho F+\sigma}\added{.}
\end{equation}
\deleted{s}\added{S}olving for $F$ in terms of $\sigma$ and $\rho$ yields
\begin{equation*}
F=\frac{\sqrt{\sigma(4\rho+\sigma)}-\deleted{(}\sigma\deleted{)}}{2\rho}.
\end{equation*} 
Therefore, we assume that $F$ is constant and satisfies \added{eq.}~\eqref{eq:Fstable1}. This assumption for $F$ ensures that the network structure is stationary even though the network itself is changing over time due to partnership dynamics. As a matter of fact, we assume that we start in stationarity.

\subsection{Independence in the degrees of partners}\label{sec:dynamicnodegree}
\begin{table}[H]
\begin{tabular}{l|l}
Variable & Description\\
\hline 
$(P_k)_{k=0}^n$ & Degree distribution for a randomly chosen individual, $\sum_{k=0}^n P_k=1$ \\ 
$(q_k)_{k=1}^n$ & Degree distribution for a newly acquired partner, $\sum_{k=1}^n q_k=1$ \\ 
$\varphi(\xi)=F$ & Probability that a binding site is free at time $\xi$ \\
&\quad after its owner acquired a partner at a different binding site\\
$(\pi_{k,l}(\xi))_{k,l=1}^n$ & Joint degree distribution of two partners \\
&\quad at time $\xi$ after partnership formation, $\sum_{k,l=1}^n\pi_{k,l}=1$
\end{tabular}  
\caption{Overview of distributions that are used in Section~\ref{sec:nodemography}}
\label{tbl:distributions2}
 \end{table}

We adopt the convention that the joint degree of two partners refers to the total number of partners of each of the individuals (including their known partner). We calculate the joint degree distribution $\pi_{k,l}(\xi)$ of two partners at partnership duration $\xi$, given that they remain partners for this period of time; see Table~\ref{tbl:distributions2} for an overview. If our derivations seem overly detailed, please bear in mind that \added{these details}\deleted{this} serve\deleted{s} to prepare for the analysis in Section~\ref{sec:demography} of a more subtle situation.

First, by combinatorics, we find that the probability  $P_k$ that a randomly chosen individual in the population has $k$ partners is simply 
\begin{equation}\label{eq:degree}
P_k=\binom n k F^{n-k}(1-F)^{k},
\end{equation}
i.e.\ the degree distribution $(P_k)$ is a binomial distribution with parameters $n$ and $1-F$. 

The probability $q_k$ that a newly acquired partner has $k$ partners (in total) is $(n-k+1)P_{k-1}/\sum_l (n-l)P_l$ (a potential partner in state $k-1$ has $(n-k+1)$ free binding sites; immediately after partner formation, it will be in state $k$). Here the sum serves to renormalize into a probability distribution. Working this out, we find that
\begin{align}
q_k&=\left.(n-k+1)\binom{n}{k-1}F^{n-k+1}(1-F)^{k-1}\middle/\sum_l(n-l)P_l\right.\nonumber\\
&=\left.n\binom{n-1}{k-1}F^{n-k+1}(1-F)^{k-1}\middle/nF\right.\nonumber\\
&=\binom{n-1}{k-1}F^{n-k}(1-F)^{k-1}\label{eq:q1}.
\end{align}
So, we find that a newly acquired partner has at least one occupied binding site and the other $n-1$ binding sites are free with probability $F$ and occupied with probability $1-F$. Hence, $q_k$ is equal to the probability that a randomly chosen \emph{partner} has $k$ partners. 

Next, let $\varphi$ denote the probability that a binding site is free at time $\xi$ after partner acquisition at another binding site of the same owner. Then $\varphi(\xi)=F$, since binding sites behave independently of one another. On the other hand, $\varphi$ satisfies
\begin{align}
\frac{d\varphi}{d\xi}&=-\rho F\varphi+\sigma(1-\varphi)\added{,}\label{eq:varphi1}\\
\varphi(0)&=F\nonumber\added{.}
\end{align}
Solving for $\varphi$ we find that 
\begin{equation}\label{eq:varphiF}
\varphi(\xi)=\frac{\sigma}{\rho F+\sigma}+\frac{\rho F^2-\sigma(1-F)}{\rho F+\sigma}e^{-(\rho F+\sigma)\xi}=F,
\end{equation}
where we used identity~\eqref{eq:Fstable1} for $F$ in the second equality. In particular, this confirms our intuition that partnership duration $\xi$ is not relevant.

We are now ready to consider the probability $\pi_{k,l}(\xi)$ that $u$ and $v$ have $k$ and $l$ partners in total at time $\xi$ after they formed a partnership, given that $u$ and $v$ remain partners in the period under consideration. First of all, note that 
\begin{equation}\label{eq:initial}
\pi_{k,l}(0)=q_kq_l,
\end{equation} 
with $q_j$ given by \added{eq.}~\eqref{eq:q1} (partnerships are formed at random between free binding sites in the population), i.e.\ there is independence in the degrees of the individuals in a newly formed partnership. Furthermore, both $u$ and $v$ have exactly one binding site occupied by their known partner and $n-1$ other binding sites at which partnership formation and separation can take place. Taking into account partnership-formation and -separation at the other binding sites, and conditioning on the existence of partnership $uv$ in the period under consideration, we find that $\pi_{k,l}$ satisfies
\begin{equation}\label{eq:ODEpi}
\begin{aligned}
\frac{d\pi_{k,l}}{d\xi}&=-\rho F(n-k)\pi_{k,l}-\rho F (n-l)\pi_{k,l}-\sigma(k-1)\pi_{k,l}-\sigma(l-1)\pi_{k,l}\\
&\phantom{=\ }+\rho F(n-k+1)\pi_{k-1,l}+\rho F(n-l+1)\pi_{k,l-1}+\sigma k \pi_{k+1,l}+\sigma l \pi_{k,l+1},
\end{aligned}
\end{equation}
with initial condition~\eqref{eq:initial}. Let
\begin{equation}\label{eq:pidef}
p_{k,l}(\xi)\coloneqq\binom{n-1}{k-1}\varphi(\xi)^{n-k}(1-\varphi(\xi))^{k-1}\binom{n-1}{l-1}\varphi(\xi)^{n-l}(1-\varphi(\xi))^{l-1}\added{.}
\end{equation}
We claim that $\pi_{k,l}(\xi)=p_{k,l}(\xi)$. Indeed, differentiating $p_{k,l}(\xi)$ with respect to $\xi$ and using \added{eq.}~\eqref{eq:varphi1}, we find that the ODE~\eqref{eq:ODEpi} for $\pi_{k,l}$ is indeed satisfied. Since these are straightforward calculations, we omit the details, and only note that the relations $(n-k)\binom{n-1}{k-1}=k\binom{n-1}{k}$ and $(k-1)\binom{n-1}{k-1}=(n-k+1)\binom{n-1}{k-2}$ yield the desired result. 

On the other hand, since $\varphi(\xi)=F$ (see \added{eq.}~\eqref{eq:varphiF}), from \added{eq.}~\eqref{eq:pidef}, we find that $p_{k,l}(\xi)=q_kq_l$. Hence the joint degree distribution of two partners at time $\xi$ after partnership formation is given by 
\begin{equation*}
\pi_{k,l}(\xi)=q_kq_l.
\end{equation*} 
In particular, we find that there is independence in the degrees of two partners. Moreover, the joint degree of two individuals in a partnership is the same (i) at partnership formation, (ii) at a specific partnership duration $\xi$ of their partnership, and (iii) at a randomly chosen time in their partnership. The only information that such a partnership gives us about the degree of the partners is that both partners have at least one occupied binding site.

\section{Dynamic network with demography}\label{sec:demography}
\subsection{Model formulation}\label{sec:dynamic2model}
A next dynamic variant of the static configuration network model is obtained by adding demographic turnover to the partnership changes of Section~\ref{sec:nodemography}, cf.~\cite{Leung2012} \added{and~\cite[Section 4]{Leung2016}}. We additionally assume that
\begin{itemize}
\item life length is exponentially distributed with parameter $\mu$
\item newborn individuals appear at a constant rate (which is equal to $\mu$ if we consider fractions, i.e.\ normalize the total population size to 1)
\item at birth, individuals enter the population without any partners
\end{itemize}
In other words, we assume a stationary age distribution with density $a\mapsto \mu e^{-\mu a}$. Note that our assumptions on demography imply that the rate at which occupied binding sites become free is $\sigma+\mu$ where $\sigma$ corresponds to `separation' and $\mu$ to `death of the partner'. The fraction $F$ of binding sites that are free now satisfies
\begin{equation*}
\frac{dF}{dt}=\mu-\mu F-\rho F^2+(\sigma+\mu)(1-F).
\end{equation*}
Again, as in Section~\ref{sec:nodemography}, $F$ stabilizes to a constant that satisfies
\begin{equation}\label{eq:Fconstant}
\rho F^2-(\sigma+2\mu)(1-F)=0.
\end{equation} 
Therefore, we assume that $F$ is constant. In terms of the model parameters this constant $F$ equals 
\begin{equation}\label{eq:Ffunction}
F=\frac{\sqrt{(\sigma+2\mu)(4\rho+\sigma+2\mu)}-(\sigma+2\mu)}{2\rho}.
\end{equation} 
As a consequence, although the network itself changes due to partnership dynamics and demographic turnover, \added{the population structure is statistically stable. In particular, the degree distribution does not change with time.} \deleted{the network structure is stable over time (and we assume that we start in stationarity).} \added{We assume that the network starts in stationarity.}

In subsequent subsections we show that, contrary to the static network and the dynamic network without demography, there is no longer independence in degrees in the dynamic network with demography. We do so by showing that age-age dependence and age-degree dependence exist, leading to the conjecture that demographic turnover causes degree-degree dependence. We verify the conjecture, and \deleted{quantify the dependence with a} \added{compute the} correlation coefficient and study how it depends on the parameters. For convenience an overview of the probabilities and distributions relevant \added{to} \deleted{for} Section~\ref{sec:demography} are given in Table~\ref{tbl:distributions3}. A related dynamic network incorporating demography (in a growing population) is considered in~\cite{Britton2010,Britton2011}, where also the degree-degree correlation is \deleted{assessed} \added{determined}.

Finally, as always, the assumptions matter. Here we model demography so that individuals enter the population without any partners. After its birth, an individual may acquire and lose partners according to the rules \deleted{of} \added{assumed in}~Section~\ref{sec:nodemography}. Therefore, the number of partners of an individual contains information about the age of that individual. One may think of different ways of modelling demography that may not necessarily lead to age dependencies. Indeed, an assumption for the numbers of partners of newborn individuals made in~\cite{Lashari2016} achieves that age dependencies are absent.

\begin{table}[H]
\begin{tabular}{l|l}
Variable & Description\\
\hline 
$\varphi(a)$ & Probability that a binding site is free given that its owner has age $a$\\ 
$\pi_0(a)$ & Density function for the age of the owner of a randomly chosen free binding site \\ 
$\pi_1(a)$ & Density function for the age of the owner\\
&\quad  of a randomly chosen occupied binding site\\
$H(a,\alpha)$ & Density function for the ages of two partners in a randomly chosen partnership\\
$p_k(a)$ & Probability that an individual has $k$ partners at age $a$, $k=0,\ldots,n$\\
$q_k(a)$ & Probability that a partner of age $a$ has $k$ partners in total, $k=1,\ldots,n$\\
$P(k,l)$ & Probability that the joint degree of a randomly chosen partnership is $(k,l)$, \\
&\quad$k,l=1,\ldots,n$
\end{tabular}  
\caption{Overview of probabilities and densities that are used in Section~\ref{sec:demography}. By assumption, whenever we write `at age $a$' (or just `age $a$'), the individual under consideration remains alive in the period between being born and reaching age $a$.}
\label{tbl:distributions3}
 \end{table}

\subsection{Age-age and age-degree dependencies}\label{sec:ageage}
We show that there is dependence between the ages of partners by reasoning at the binding site and partnership level (compare with the derivation of the correlation coefficient for the related model in~\cite[Section 3.3]{Britton2011}). Whenever we write `at age $a$' (or just `age $a$'), the individual under consideration has, by assumption, survived until that age. First, consider a binding site (see also~\cite[Section 4]{Leung2016}). Let $\varphi(a)$ denote the probability that a binding site is free at age $a$. Then $\varphi$ satisfies the ODE 
\begin{equation*}
\frac{d\varphi}{da}=-\rho F\varphi+(\sigma+\mu)(1-\varphi),
\end{equation*}
with birth condition $\varphi(0)=1$, so
\begin{equation}\label{varphi}
\varphi(a)=\frac{\sigma+\mu}{\rho F+\sigma+\mu}+\frac{\rho F}{\rho F+\sigma+\mu}e^{-(\rho F+\sigma+\mu)a}.
\end{equation}
We have the identity
\begin{equation*}
F=\int_0^\infty \mu e^{-\mu a}\varphi(a)da
\end{equation*}
for the fraction of free binding sites (\added{one can }use \added{eq.}~\eqref{eq:Fconstant} \added{to check that this identity holds}). Then the probability density function of the age of (the owner of) a free binding site is given by
\begin{equation}\label{eq:agedistribution}
\pi_0(a)=\frac{\mu e^{-\mu a}\varphi(a)}F.
\end{equation}
Similarly, the probability density function for the age of a randomly chosen occupied binding site is 
\begin{equation}\label{eq:agedistribution2}
\pi_1(a)=\frac{\mu e^{-\mu a}(1-\varphi(a))}{1-F}.
\end{equation}
Next, observe that, due to independence at partner formation, the joint age density function of two partners at partner formation is the product $\pi_0(a)\pi_0(\alpha)$ of age density functions of free binding sites. Two free binding sites are paired at rate $\rho F^2$, and a partnership dissolves at rate $\sigma+2\mu$. Furthermore, newborn individuals (at age 0) have no partners. Therefore, the density function for the ages of two partners in a randomly chosen partnership satisfies
\begin{align*}
\frac{\partial p}{\partial a}+\frac{\partial p}{\partial \alpha}&=\rho F^2 \pi_0(a)\pi_0(\alpha)-(\sigma+2\mu)p\added{,}\\
p(0,\alpha)&=0=p(a,0)\added{.}
\end{align*}
Solving for $p$ and normalizing into a probability density function $H(a,\alpha)$ for the ages of two partners in a randomly chosen partnership yields
\begin{align}
H(a,\alpha)&=\frac{p(a,\alpha)}{\int_0^\infty\int_0^\infty p(b,\beta)dbd\beta}\nonumber\\
&=\frac{\int_0^{\min{(a,\alpha)}}\rho F^2\pi_0(a-\xi)\pi_0(\alpha-\xi)e^{-(\sigma+2\mu)\xi}d\xi}{1-F}.\label{eq:H}
\end{align}
Here we used that $\int_{a=0}^\infty\int_{\alpha=0}^\infty p(a,\alpha)dad\alpha=1-F$ in the second equality (use \added{eq.}~\eqref{eq:probint} in Lemma~\ref{lem:probint} of Appendix~\ref{app:probint}).

Note that we can also reason directly from the interpretation \added{of the model} to obtain \added{eq.}~\eqref{eq:H}. Consider a randomly chosen partnership of duration $\xi$ with partners of age $a$ and $\alpha$, then at partnership formation these individuals had ages $a-\xi$ and $\alpha-\xi$. At partnership formation, their ages are independent and the densities are $\pi_0(a-\xi)$ and $\pi_0(\alpha-\xi)$. The rate at which a partnership is formed is $\rho F^2$. Next, the probability that a partnership has duration of at least $\xi$ is $e^{-(\sigma+2\mu)\xi}$. Obviously, the partnership duration $\xi$ satisfies $0\leq\xi\leq\min(a,\alpha)$. Finally, the probability that a binding site is occupied is $1-F$, yielding the normalizing constant. By combining these elements we obtain \added{eq.}~\eqref{eq:H}.

Finally, the expression for $H(a,\alpha)$ allows us to conclude that there is dependence in the ages of two partners. Indeed, if these were independent of one another, then the probability density function for the ages of two partners in a randomly chosen partnership would be the product of the probability density functions for the age of a randomly chosen occupied binding site, i.e.\ $\pi_1(a)\pi_1(\alpha)$ with $\pi_1$ given by \added{eq.}~\eqref{eq:agedistribution2}. Since $H(a,\alpha)\neq\pi_1(a)\pi_1(\alpha)$, we conclude that demographic turnover (in the way that 
we have modelled it) induces age dependence. 

In order to show that the age and degree of an individual are correlated, we relate the binding site level to the i-level. Let $p_k(a)$ denote the probability that an individual has $k$ partners at age $a$. Then, by the independence assumption for binding sites, combinatorics yields
\begin{equation*}
p_k(a)=\binom{n}{k}\varphi(a)^{n-k}(1-\varphi(a))^{k}.
\end{equation*}
In particular, we find that information about the age of an individual helps to predict its degree. Since we also have dependence in the ages of two partners, we expect that there is dependence in the degrees of two partners. This dependence is quantified by means of the degree correlation coefficient in the next subsection.

\subsection{Quantifying the \added{degree-}degree dependence}\label{sec:degreedegree}
\subsubsection{Joint degree distribution}
Let $P(k,l)$ denote the probability that a randomly chosen partnership has joint degree $(k,l)$, $1\leq k,l\leq n$, i.e.\ the probability that two individuals have degrees $k$ and $l$ given that they are partners.

Next, we consider the probability that an individual $u$ has $k$ partners at age $a$, given that it is a partner of an individual $v$ with age $\alpha$. By assumption, given the age of the owner, the binding sites of an individual are independent of one another as long as the owner does not die. Therefore, the fact that $u$ of age $a$ is in a partnership with individual $v$ simply means that one of the $n$ binding sites of $u$ is already occupied. The probability that any other binding site of $u$ is free is $\varphi(a)$ with $\varphi(a)$ given by \added{eq.}~\eqref{varphi}. Combinatorics yields that the probability that $u$ has, at age $a$, $k$ partners in total, given partner $v$ with age $\alpha$, is equal to
\begin{equation}\label{eq:qk_age}
q_k(a)=\binom{n-1}{k-1}\varphi(a)^{n-k}(1-\varphi(a))^{k-1},
\end{equation}
\added{where} $1\leq k\leq n$, $a>0$. Conditioning on $u$ having age $a$ and $v$ having age $\alpha$, the probability that the joint degree of $u$ and $v$ is $(k,l)$ is simply
\begin{equation}\label{eq:qk}
q_k(a)q_l(\alpha).
\end{equation}
The probability density function for the ages of the partners in a randomly chosen partnership is $H(a,\alpha)$ (see \added{eq.}~\eqref{eq:H}). By integrating over all possible ages $a$ and $\alpha$ we obtain the probability $P(k,l)$ that two individuals $u$ and $v$ in a randomly chosen partnership have $k$ and $l$ partners:
\begin{align}
P(k,l)&=\int_{\alpha=0}^\infty\int_{a=0}^\infty q_k(a)q_l(\alpha)H(a,\alpha)dad\alpha\label{exact}%\\
%&=\int_{\alpha=0}^\infty\int_{a=0}^\alpha\int_{\xi=0}^a q_k(a)q_l(\alpha)\frac{\rho F^2\pi_0(a-\xi)\pi_0(\alpha-\xi)e^{-(\sigma+2\mu)\xi}}{1-F}d\xi dad\alpha\nonumber\\
%&\phantom{=\ }+\int_{\alpha=0}^\infty\int_{a=\alpha}^\infty\int_{\xi=0}^\alpha q_k(a)q_l(\alpha)\frac{\rho F^2\pi_0(a-\xi)\pi_0(\alpha-\xi)e^{-(\sigma+2\mu)\xi}}{1-F}d\xi dad\alpha\nonumber
\end{align}
%(where we used~\eqref{eq:H} in the second equality). 
Note that two individuals in a randomly chosen partnership are identically distributed (with respect to age as well as with respect to number of partners) so $P$ is symmetric in $k$ and $l$, i.e.\ $P(k,l)=P(l,k)$. Furthermore, note that both the $q_k(a)$ and $\pi_0(a)$ are functions of $\varphi(a)$ with $\varphi(a)$ given by \added{eq.}~\eqref{varphi}. By algebraic expansion of the powers of the form $(x+y)^m$ one can rewrite these probabilities $q_k$ as
\begin{align*}
q_k(a)&=\binom{n-1}{k-1}\sum_{j=0}^{n-k}\sum_{i=0}^{k-1}\binom{n-k}{j}\binom{k-1}{i}(-1)^i \\
&\phantom{\binom{n-1}{k-1}\sum_{j=0}^{n-k}}\left(\frac{\sigma+\mu}{\rho F+\sigma+\mu}\right)^{k-1+j}\left(\frac{\rho F}{\rho F+\sigma+\mu}\right)^{n-k-j}e^{-(\rho F+\sigma+\mu)(i+j)a}.
\end{align*} 
So \added{eq.}~\eqref{exact} can be written as the sum of integrals over exponential functions. By working out these integrals one obtains an explicit expression for $P(k,l)$ in terms of the model parameters. \deleted{In Section~\ref{sec:dependence}, when investigating the dependence between the number of partners of two partners, the sum $\sum kl P(k,l)$ plays an important role.} \added{In this paper, our aim is to quantify the dependence between the degrees of partners by means of the correlation coefficient, and} \deleted{As} we are not that much interested in \added{the specific} probabilities $P(k,l)$ for specific $k$ and $l$. \added{Therefore}, we do not write down the explicit expression for $P(k,l)$. \added{Rather, when considering the correlation coefficient in Section~\ref{sec:dependence}, the sum $\sum kl P(k,l)$ plays an important role, which, among other sums, is worked out in Appendix~\ref{app:corr}.} Finally, we can obtain the marginal degree distribution $(Q_k)$ from the joint degree distribution $P(k,l)$, and this leads to the same expression for $Q_k$ as the one derived from the degree distribution in the population (see our previous work~\cite{Leung2012}); see Appendix~\ref{app:marginal} for details.

\begin{remark}[Binding site or i-level perspective]
Note that one could also obtain the probability $P(k,l)$ by taking the perspective of individuals in a partnership rather than binding sites. This is exactly what we have done in~\cite[Appendix B]{Leung2015a} for $n=2$. We calculated probability~\eqref{exact} for $n=2$ using the number of partners of two individuals in a partnership without taking into account their ages. While in principle this is not more difficult than the reasoning in this section, generalizing to $n>2$ quickly becomes quite involved as the number of possible $(k,l)$ pairs grows quickly with $n$. This is also the reason that we only worked out $n=2$ in the appendix of~\cite{Leung2015a}: for $n=2$ only the inverse of a $3\times3$ matrix is needed, but the size of this matrix quickly grows with $n$. Nevertheless, one can check that \added{eq.}~\eqref{exact} coincides with \added{eq.}~(51) of~\cite{Leung2015a} for $n=2$ (e.g.\ by using Mathematica and identity (20) in~\cite{Leung2012} for $F$).
\end{remark}

\subsubsection{Correlation coefficient}\label{sec:dependence}
Choose a partnership at random and consider one of the \added{partners} \deleted{participants}. The probability that this individual has $k$ partners is $Q_k$ where $Q_k$ is given by \added{eq.}~\eqref{eq:Qk}. On the other hand, the probability that the randomly chosen partnership has joint degree $(k,l)$ is $P(k,l)$ with $P(k,l)$ given by \added{eq.}~\eqref{exact}. In~\cite{Leung2016,Leung2015a} we approximated the network structure by pretending that there is independence between partners of two individuals that are in a partnership, i.e.\ we approximated $P(k,l)$ by $Q_kQ_l$. That this is really an approximation, i.e.\ that $P(k,l)\neq Q_kQ_l$, was shown by way of explicit calculations for $n=2$ in~\cite[Appendix B]{Leung2015a}. In this section we investigate the approximation in more detail.  

We use the correlation coefficient to quantify the dependence (this coefficient is often denoted by $\rho$ but since we have already reserved this symbol for the \deleted{per-}partnership formation rate we will simply write $corr$). Let $D_u$ and $D_v$ be the random variables denoting the degrees of the individuals $u$ and $v$ in a randomly chosen partnership. Then the joint probability distribution of $D_u$ and $D_v$ is $(P(k,l))$. The degree correlation coefficient is given by
\begin{equation}
corr=\frac{E(D_uD_v)-E(D_u)E(D_v)}{\sqrt{Var(D_u)Var(D_v)}}=\frac{Cov(D_u,D_v)}{Var(D_u)}=\frac{A-B^2}{C-B^2}.\label{eq:corr}
\end{equation}
where $Cov(D_u,D_v)$ is the covariance of $D_u$ and $D_v$, and 
\begin{equation}\label{eq:sums}
\begin{aligned}
A=\sum_{k,l}klP(k,l),\qquad B=\sum_k kQ_k, \qquad\text{and}\qquad C=\sum_k k^2Q_k.
\end{aligned}
\end{equation}
The correlation coefficient satisfies \mbox{$-1\leq corr\leq1$}, where $corr=-1$ corresponds to fully disassortative mixing in the degrees of partners and $ corr=1$ corresponds to fully assortative mixing. In case the degrees of partners are independent of one another, the correlation coefficient is zero. Note that for $n=1$, i.e.\ in the case of monogamous pair formation, the degree of a partner is always 1, i.e.\ $Q_1=1$ and $P(1,1)=1$. The correlation coefficient is not defined for this case. We are only interested in $n=2,3,\ldots$

We are interested in the behaviour of $ corr$ as a function of the four model parameters $n$ (partnership capacity), $\sigma$ (partnership separation rate), $\rho$ (\deleted{per-pair} partnership formation rate), and $\mu$ (death rate). An explicit expression for $corr$ in terms of the model parameters is found by working out $A$, $B$, and $C$ (defined by \added{eq.}~\eqref{eq:sums}) at the right-hand side of \added{eq.}~\eqref{eq:corr}. These can all be expressed as integrals of simple functions of $\varphi(a)$, which we then can evaluate (note that $F$ is also a function of model parameters $\sigma$, $\rho$, and $\mu$; see \added{eq.}~\eqref{eq:Ffunction}). We work this out in Appendix~\ref{app:corr}. 

The calculations in~\cite[Appendix B]{Leung2015a} already showed that there is dependence for $n=2$ with $ corr>0$. So for $n=2$ there is assortativity with respect to the degrees of partners. For general $n$, the expression~\eqref{eq:covariance} in Appendix~\ref{app:corr} in this text shows that $corr>0$. So, in accordance with our expectation in Section~\ref{sec:ageage}, there is dependence in the degrees of partners. Moreover, $ corr>0$ for all $\rho,\sigma,\mu>0$, and $n>1$. In other words, for all $n>1$, the network is assortative in the degree: partners tend to have similar degrees. We study how $corr$ depends on the model parameters in the next subsection.

\begin{remark}[Limiting behaviour $\mu\to0$]
Note that $\lim_{\mu\to 0}corr=0$ (use the explicit expression for $corr$ calculated in Appendix~\ref{app:corr}), in complete accordance with the independence in degrees in the dynamic network model without demography of Section~\ref{sec:nodemography}.
\end{remark}

\begin{remark}[Degree-degree correlation]
\added{Degree correlation does occur in some real world networks and constructive procedures to generate networks with prescribed degree-degree correlation have been devised, see~\cite{Newman2002, Newman2003, Ball2012} and references therein. As has been shown here, a dynamic network model incorporating demographic turnover can exhibit degree-degree correlation as a consequence of age-age correlation. This provides a possible mechanistic interpretation of emergent assortative mixing.}
\end{remark}

\subsubsection{The effect of demographic changes on the correlation coefficient}\label{sec:numerics}
Now that we have an explicit expression for $ corr$, we can ask how it depends on the model parameters. Our main interest is in the relative time scales of demographic changes compared to partnership changes. Therefore, we are interested in \added{$\mu/\sigma\in(0,1]$.} \added{In particular, we are interested in} the limit $\mu/\sigma\to0$, i.e.\ in the limit that partnership changes are much faster than any demographic changes, while at the same time $\rho/\sigma$ remains constant, i.e.\ the \deleted{per-pair} \added{partnership} formation rate $\rho$ and the separation rate $\sigma$ are on the same time scale.

The formula~\eqref{eq:corr} for $corr$ and the expressions~\eqref{corr1}-\eqref{corr3} allow for an explicit expression of $corr$ in terms of model parameters $n$, $\sigma$, $\rho$, and $\mu$. To eliminate one parameter, we rewrite the correlation coefficient $corr$ as a function of $n$, $\tilde\rho=\rho/\sigma$, and $\tilde\mu=\mu/\sigma$:
\begin{equation*}
corr=\frac{a}{b},
\end{equation*}
with
\begin{align*}
a&=\tilde\mu^2 (n-1) (2 \tilde\mu +3 x+3) \left(4 \tilde\mu^2+2 \tilde\rho^2+\tilde\mu  (8 \tilde\rho -2 x+4)-2 \tilde\rho  (x-2)-x+1\right),\\
b&=2 (-2 \tilde\mu +x-1) (2 \tilde\mu +x+2)\big\{-4 \tilde\mu^3 (n+1)+2 \tilde\mu^2 (n (-3 \tilde\rho +x-2)-9 \tilde\rho +x-3\big\} \\
&\phantom{=\ }+\tilde\mu  (n (\tilde\rho  (x-3)+x-1)-23 \tilde\rho +7 \tilde\rho  x+2 x-2)+6 \tilde\rho  (x-1)),\\
x&=\sqrt{(1+2\tilde\mu)(4\tilde\rho+1+2\tilde\mu)}.
\end{align*}

Next, by considering the derivative of $ corr$ with respect to the parameters $n$, $\tilde\rho$, and $\tilde\mu$, we find that $ corr$ is strictly increasing in $n$ and $\tilde\rho$, and $\tilde\mu$. In Section~\ref{sec:dependence} we also observed that $corr>0$ so the network is assortative in degree. Furthermore, \added{since $corr$ is strictly increasing in $n$, $\tilde\rho$, and $\tilde\mu$, by considering the limit $\lim_{n\to\infty,\tilde\rho\to\infty, \tilde\mu\to1} corr=1/4$, we find that the correlation coefficient is at most 1/4, for all $n\geq2$, $\tilde\rho>0$, and $\tilde\mu\in(0,1]$.}\deleted{we find that the correlation coefficient has a supremum with value $1/4$, i.e.}
\begin{equation*}
\deleted{\sup\{ corr\mid \tilde\rho\geq0, 0\leq\tilde\mu\leq1, n\geq2\}=\lim_{n\to\infty,\tilde\rho\to\infty, \tilde\mu\to1} corr=\frac14.}
\end{equation*}
\deleted{So the correlation coefficient is always between 0 and 1/4. }

Finally, we investigate $ corr$ numerically. For fixed $n$, we investigate the correlation coefficient as a function of $\tilde\mu$ and $\tilde\rho$. \added{As we are interested in the relative time scales of demographic changes compared to partnership changes, we consider $\tilde\mu\in(0,1]$ and $\tilde\rho\in(0,\infty)$).} In general, we find that the correlation coefficient is close to zero; see Fig.~\ref{fig:assortative} for $n=3$ and $n=30$. So, although there is dependence, the dependence is generally not very strong. The correlation coefficient $ corr$ is largest when the time scales of demographic and partnership changes are close to each other. Moreover, the higher the \deleted{partner acquisition} \added{partnership formation} rate \added{is} compared to the separation rate, the larger $ corr$ is.

\begin{figure}[H]
\centering
\includegraphics[scale=0.6]{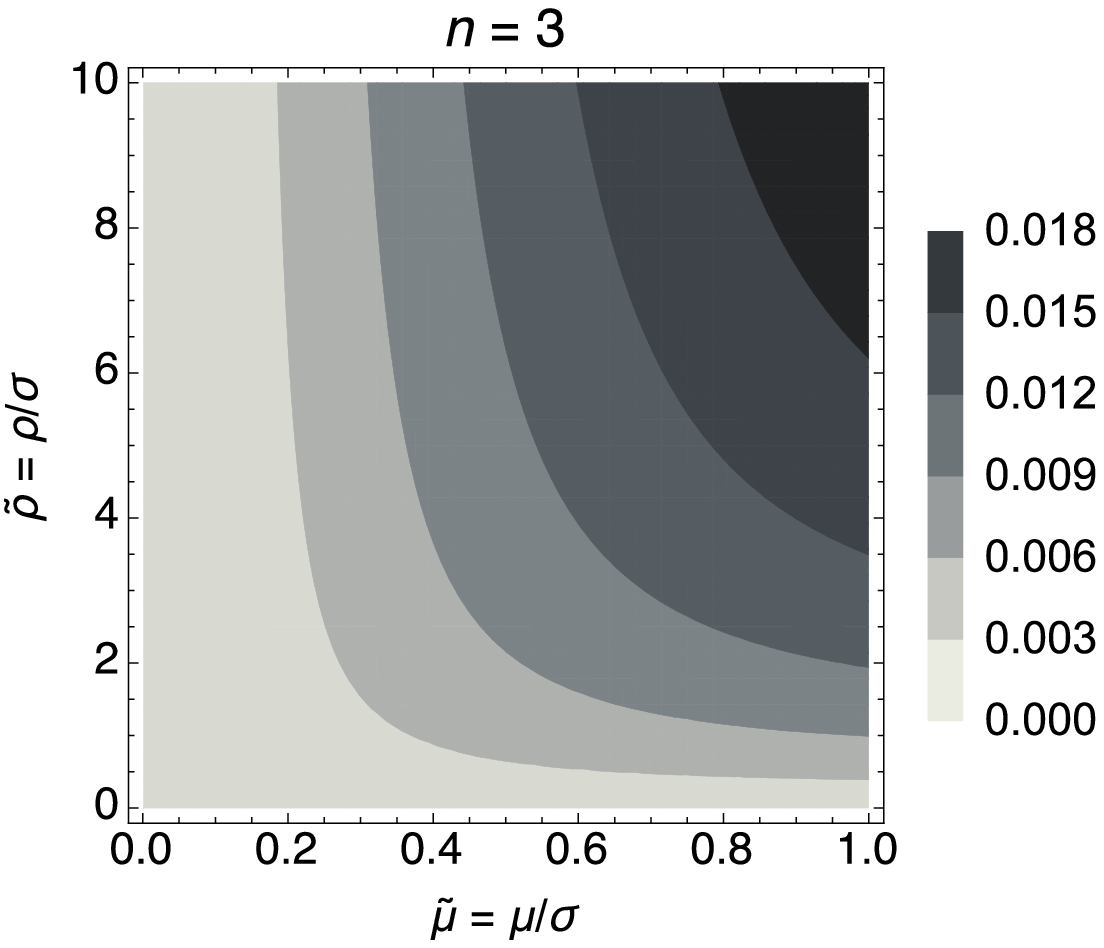}
\hfill
\includegraphics[scale=0.6]{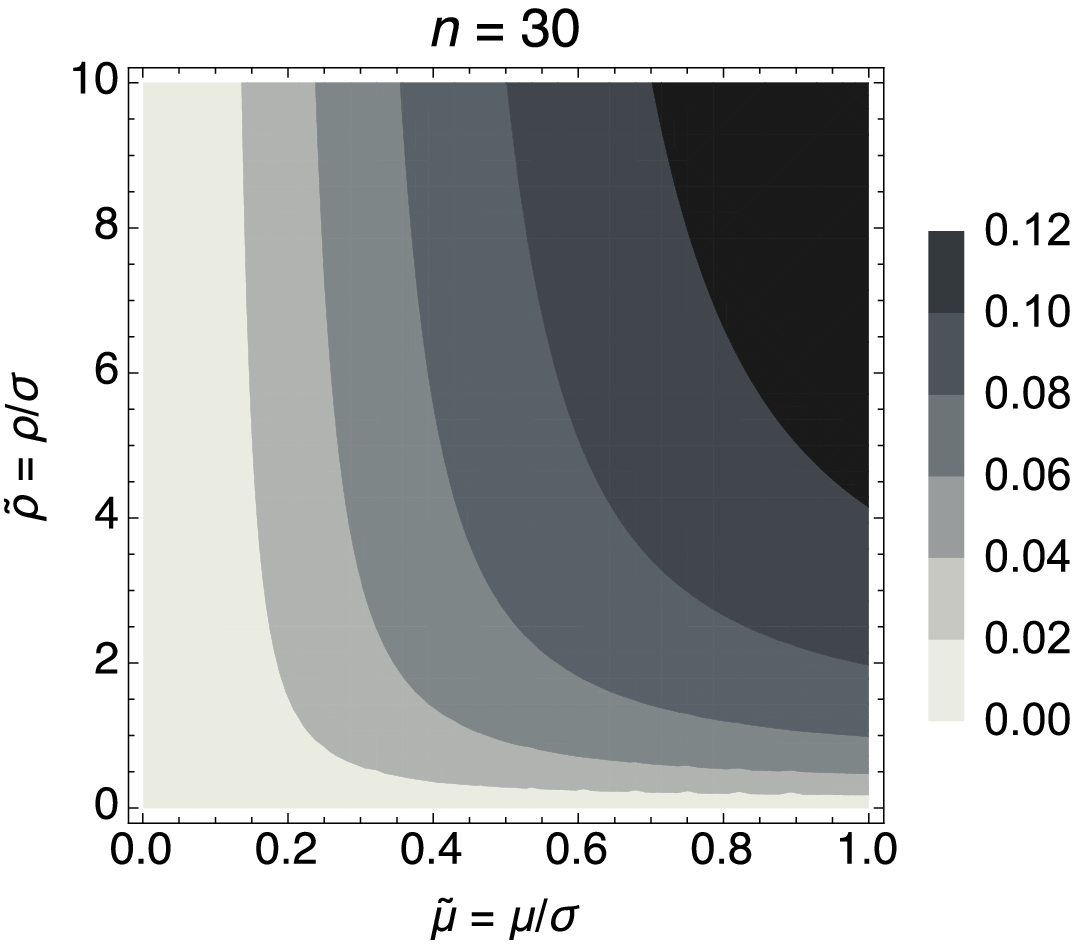}
\caption{Correlation coefficient as a function of $\tilde\mu$ (x-axis) and $\tilde\rho$ (y-axis) for $n=3$ and $n=30$. Note that the colour scales are different in the two figures: for $n=3$ the scale is between 0 and 0.018 and for $n=30$ the scale is between 0 and 0.12.}
\label{fig:assortative}
\end{figure}

Next, we compare \added{the effect of} different $n$ values \deleted{to each other} while keeping $\tilde\rho$ fixed in Fig.~\ref{fig:correlation}. While the correlation coefficient increases as a function of $n$, for relatively small values of $n$, $ corr$ remains relatively close to 0 (compared to the supremum value of 1/4). For all $n$, it holds that the faster partnership changes are compared to demographic changes, i.e.\ the smaller $\tilde\mu$ is, the smaller $ corr$ is. We find that $ corr$ tends to zero quite rapidly as $\tilde\mu\to0$.

\begin{figure}[H]
\centering
\includegraphics[scale=0.8]{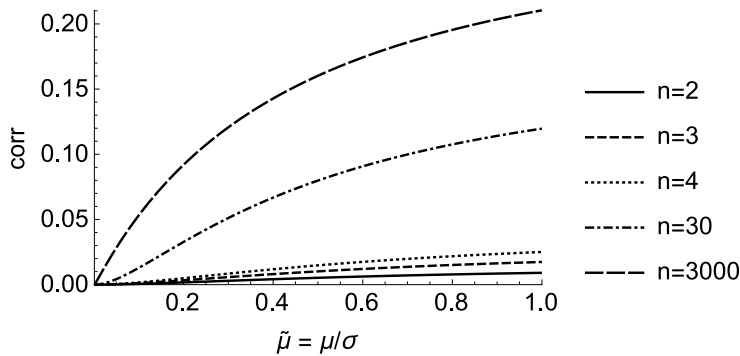}
\caption{Correlation coefficient as a function of the ratio $\tilde\mu$ for $n=2,3,4$, $n=30$, and $n=3000$ and $\tilde\rho=10$. The corresponding fraction \added{$F$} of free binding sites is monotonically decreasing from $\sim 0.41$ for \deleted{$\mu/\sigma$} \added{$\tilde \mu$}=1 to $\sim 0.27$ for $\tilde\mu=0$.}
\label{fig:correlation}
\end{figure}

\deleted{So while we find that the correlation coefficient is always strictly positive, in our numerical investigations, we find that, in general, the correlation coefficient is close to zero.}

\section{The mean field at distance one assumption and the spread of an \added{SIR} infectious disease on the network}\label{sec:meanfield}
In this text we have so far considered the static configuration network and two dynamic variants, one without and one with demographic turnover (Sections~\ref{sec:nodemography} and~\ref{sec:demography}, respectively). If we describe the network by only labelling individuals by their degree, i.e.\ by their numbers of partners, then no information about the partners of partners is included in the bookkeeping. However, in both the static network and the dynamic network without demography, we know that there is independence in the degrees of partners. Therefore, although not explicitly included in our bookkeeping, statistical information about the number of partners of partners of an individual with $k$ partners is readily available in the form of the size-biased degree distribution. 

This is not the case for the dynamic network model with demographic turnover. We have seen in Section~\ref{sec:demography} that in this dynamic network model there is dependence in degrees of partners. Therefore, the degree of the individual under consideration cannot be ignored when considering the degree of the partner. But\deleted{,} in previous work~\cite{Leung2015a, Leung2016}\deleted{,} we did exactly that. We pretended that the degree of the partner was independent of the degree of the focus individual, and we termed the approximation the \added{`}mean field at distance one assumption\added{'} (more appropriately we should have called it the mean field at distance one \emph{approximation}). \added{This mean field at distance one assumption allowed us to formulate a model for the spread of infection on the dynamic network with demography. It enabled us to write down a closed system of equations that is analytically tractable. Therefore, while we were well aware that an approximation was made by this assumption, it was very convenient to do so.} \deleted{While we were well aware that an approximation was made by this assumption, it was very convenient to do so. It allowed us to close the system of equations describing the disease dynamics on the network.} 

Information about partners of partners of an individual is crucial in describing the disease dynamics on the network. In the remainder of this section we elaborate on this point. Consider the spread of an SIR (Susceptible$\to$Infectious$\to$Recovered) infection superimposed on the network. We label each individual by (i) its disease status, (ii) the number of partners it has, and (iii) the disease status of each of these partners. Then, in order to describe the disease dynamics in the population, we need to make statements about partners of partners of an individual. Indeed, suppose we have an individual with a susceptible partner, then the rate at which this susceptible partner becomes infected depends on the total number of infectious partners it has. However, this kind of information is exactly what we do not have in our description. In fact, such information would only be available if we have a complete description of the entire network. Indeed, suppose we would incorporate partners of partners in our description. Then we would also need to know about the number of infectious partners of susceptible partners of partners, etcetera. This is where the mean field at distance one assumption comes into play. This is the assumption that we may average over the population in a certain way and consider the \emph{expected} number of infectious partners of a susceptible partner instead. 

In the static network and the dynamic network without demography, with independence in degrees, averaging is done as follows. Consider an individual $u$ with disease status $d$ (either susceptible, infectious, or recovered), and consider a susceptible partner $v$ of $u$. Then we take into account that $u$ has disease status $d$ but not the degree of $u$ (here we use independence of degrees): the expected number of infectious partners of $v$ is the expected number of infectious partners of a susceptible partner of an individual with disease status $d$. The latter is the expected number at the p-level. In fact, we can apply the mean field at distance one assumption purely at the binding site level. The probability that $v$ has $k$ partners is given by the size-biased degree distribution. Individual $v$ has one special binding site for which the transmission rate along this binding site is determined by the disease status of $u$. The $k-1$ other binding sites of $v$ are indistinguishable. The transmission rate along each of these other $k-1$ binding sites is determined by the probability that a binding site of $v$ is occupied by an infectious partner.

In the static network case one can prove that the mean field at distance one description is exact: the deterministic description can be obtained as the large population limit of a stochastic model (under suitable technical conditions), see~\cite{Decreusefond2012,Barbour2013,Janson2014}. For the dynamic network without demography we conjecture that this is also true and we pose this as an open problem in~\cite{Leung2016}. In Section~\ref{sec:nodemography} we provided evidence in support of this conjecture by showing that, as in the static setting, there is independence in the degrees of partners. In the dynamic network with demography we \emph{know} we need to take into account the dependence in degrees. 

In Section~\ref{sec:degreedegree} we quantified the dependence between degrees through the correlation coefficient that we subsequently studied numerically. While the degree correlation is always larger than zero in case of demographic turnover, in general, we found the correlation coefficient to be quite small. So, even though an approximation is made by ignoring knowledge about the degree of an individual $u$ when considering the degree of a partner $v$, this approximation may not be that bad. Yet, we wonder whether it is possible to give an exact statistical description of the disease dynamics on the network. We conjecture that this is in fact possible and that the key to this is age. 

As we have seen in Section~\ref{sec:demography}, degree-degree correlation can be deduced from age-age correlation. If we incorporate the age of partners in the bookkeeping of individuals, we can use that age to predict the number of partners of those partners. More concretely, consider a binding site belonging to an individual with age $a$. This binding site is either free or occupied by a partner \emph{with age $\alpha$}. If the binding site is, at age $a$, occupied by a partner with age $\alpha$, then this partner has $k$ partners with probability $q_k(\alpha)$, where $q_k(\alpha)$ is given by \added{eq.}~\eqref{eq:qk_age}. In particular, no approximation needs to be made. We conjecture that this carries over to the setting with an infectious disease superimposed on the network: by including in our bookkeeping not only the disease status and age \deleted{(and time of birth)} of the binding site under consideration and the disease status of \deleted{the partner (if it has one) but also the age of this partner} \added{any partner, but also the ages of these partners}, one can again employ the mean field at distance one assumption without making an approximation, i.e.\ average over the population in the correct way: we may consider the \emph{expected number} of infectious partners of a susceptible partner \emph{of age $\alpha$} of an individual with disease status $d$ and age $a$ (which is part of the description of the model if the bookkeeping includes age of partners).

Proving this claim about the bookkeeping with the age of partners included is \added{both} outside the scope of this text \added{and outside our area of expertise}. Rather we conclude this section by highlighting some aspects of the mean field at distance one assumption by considering the basic reproduction number $R_0$. The traditional perspective that one takes for $R_0$ is that of an infectious case: $R_0$ can be interpreted as the expected number of secondary cases generated by one typical newly infected case at the beginning of an epidemic. As we explained in~\cite{Leung2016}\added{,} it can be advantageous to take \deleted{a}\added{the} different perspective of `reproduction opportunities' (where `reproduction' corresponds to transmission of the infectious agent to another host). In this context reproduction opportunities consist of $-+$ links, i.e.\ partnerships between susceptible ($-$) and infectious ($+$) individuals. This different perspective does not change the expression that one obtains for $R_0$. So we can interpret $R_0$ as the expected number of $-+$ links generated by one typical newly formed $-+$ link at the beginning of an epidemic. 

The reasoning in~\cite[Section 4.3]{Leung2016} was as follows. \added{At the beginning of an epidemic, for an SIR infection,} there are two birth-types of $-+$ links:
\begin{description}
\item[Type 0] the $-+$ link was formed when a $-$ binding site and a $+$ binding site got connected
\item[Type 1] the $-+$ link is a transformed $--$ link (one of the two owners got infected by one of its other partners)
\end{description}
Note that the density of the age distribution of the owner of a binding site is $\pi_0$ upon partner formation (see \added{eq.}~\eqref{eq:agedistribution}). Therefore, the density of the age distribution of the $-$ binding site of Type 0 $-+$ links is $\pi_0$. However, the age of the $-$ binding site of Type 1 $-+$ links is correlated to the age of its $+$ partner. In~\cite{Leung2016}, we approximated the age of the $-$ binding site in the Type 1 link by ignoring the correlation with the age of its $+$ partner. We approximated the density of the age distribution of the $-$ binding site in the Type 1 link by $\pi_1$ where $\pi_1$ is given by \added{eq.}~\eqref{eq:agedistribution2} (see~\cite[Section 4.3]{Leung2016} for details). These densities $\pi_0$ and $\pi_1$ for the ages of binding sites are key in characterizing $R_0$. One ends up with a characterization of $R_0$ as the dominant eigenvalue of a $2\times2$ next-generation matrix $K$ \added{where entry $K_{ij}$ of $K$ can be interpreted as the expected number of secondary cases with state-at-infection $i$ caused by one newly infected individual with state-at-infection $j$ at the beginning of an epidemic}~\added{\cite[Chapter 7]{Diekmann2013}}. 

However, if we include the age of partners in our bookkeeping, then this also needs to be included in our characterization of $R_0$. While nothing changes for the $-$ binding sites in the Type-0 links in terms of the density of the age distribution, we can no longer simply consider Type-1 links. Rather, one needs to keep track of age at the moment that the Type 1 $-+$ link is born. This leads to an infinite-dimensional problem rather than \added{the simple setting of two types} \deleted{the very nice two types} that arise\deleted{s} from the approximation. Clearly from the point of view of the \deleted{$R_0$ characterization} \added{characterization of $R_0$} it is attractive to make an approximation by ignoring age correlation \deleted{and assuming independence instead} \added{i.e.\ assuming independence}. One only deals with two types (and the dominant eigenvalue of a $2\times2$ next-generation matrix) rather than infinitely many types \added{(and a corresponding next-generation operator and its spectral radius)}. 

Is \deleted{all} \added{the mathematical tractability} then lost by including ages of partners in the bookkeeping? No, not necessarily. But the \mbox{$R_0$-characterization} does illustrate that \added{including the age of partners in the bookkeeping will make the model formulation and analysis far} \deleted{things will be} less straightforward than in the static network or the dynamic network without demography. 

\section{Conclusion}\label{sec:conclusion}
In this text we discussed the mean field at distance one assumption for two dynamic network models \added{of~\cite{Leung2016}} that are inspired by the (static) configuration network. The first dynamic network model includes partnership formation and separation, while the second dynamic network model additionally includes demographic turnover. We concerned ourselves with a description that only includes individuals and their partners, without any information about partners of partners in the bookkeeping. The mean field at distance one assumption concerns itself with these partners of partners. It states that one can average over the population in a well-defined way to obtain the relevant information. In case of a static configuration network the mean field at distance one assumption holds as there is independence in the degrees of partners. This independence in degrees of partners is shared by the dynamic network model without demography\added{; see Section~\ref{sec:nodemography}}. \added{This independence result suggests that we can describe the spread of infection on the dynamic network without demography using the mean field at distance one approach (as indeed conjectured in~\cite{Leung2016}).} \deleted{The results of Section~\ref{sec:nodemography} are in favour of our previous conjecture that our model for the spread of infection on the dynamic network without demography is exact; see~\cite{Leung2016}.}

However, degree dependence between partners arises in the dynamic network model with demography. We showed this via the existing age dependence between partners \added{in Section~\ref{sec:ageage}}. \added{As discussed in Section~\ref{sec:meanfield},} \added{i}\deleted{I}n previous work we ignored these dependencies between the partners~\cite{Leung2016,Leung2015a}. In \added{the current} \deleted{this} text we investigated the dependency between partners by means of the degree correlation coefficient. In general this degree correlation coefficient is positive but reasonably small. This is especially the case if demographic and partnership changes are on somewhat different time scales, and \deleted{partner acquisition} \added{partnership formation} and separation are on comparable time scales, and partnership capacity $n$ is not too large, which are quite reasonable assumptions to make.

Clearly there are advantages to approximating the true process by ignoring these dependencies between partners. Especially if the degree correlation is rather small, then it is attractive to do so. The goal of this text is not to advocate that one should never concern oneself with approximations (clearly not as this is exactly what we have done in previous work). Rather, our point is that it is important to be aware of the assumptions that one makes when formulating models and the limitations and consequences of the assumptions. 

Ideally, one can provide a statistical description for transmission dynamics on a network without making approximations (whether it is desirable to still make approximations, e.g.\ for computational convenience, is a different issue). In Section~\ref{sec:meanfield}, we speculated that by incorporating age of partners in the dynamic network with demography, one can avoid making approximations. But, as we also outlined in the same section, this probably comes at a price. It may be that the analysis of the model becomes much harder. How to formulate and analyse the model that includes ages of partners is outside of the scope of this text and is left for future work. Here we end with the conjecture that bookkeeping that takes the age of partners into account allows for an exact description of the spread of infectious diseases on the dynamic network with demography. \added{We hope that this text motivates some probabilists to take up the challenge of proving (or, unexpectedly, disproving) the conjecture.}

\appendix

\setcounter{equation}{0}
\renewcommand{\theequation}{\Alph{section}.\arabic{equation}}
\numberwithin{equation}{section}

\section{Relationship between $\boldsymbol{\pi_1(a)}$ and $\boldsymbol H$}\label{app:probint}
The probability density function $\pi_1(a)$ for the age of an occupied binding site (see \added{eq.}~\eqref{eq:agedistribution2}) is related to the probability density function $H(a,\alpha)$ for the ages of two partners $u$ and $v$ in a randomly chosen partnership. This relation is formulated in \added{eq.}~\eqref{eq:probint}.

\begin{lemma}\label{lem:probint}
\begin{equation}\label{eq:probint}
\int_{\alpha=0}^\infty H(a,\alpha)d\alpha=\pi_1(a).
\end{equation}
\end{lemma}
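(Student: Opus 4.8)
The plan is to verify eq.~\eqref{eq:probint} by a direct computation: substitute the explicit formula~\eqref{eq:H} for $H(a,\alpha)$, interchange the order of integration, use that $\pi_0$ is a probability density, and finally invoke the defining identity~\eqref{eq:Fconstant} for $F$ to match the result with $\pi_1(a)$ from eq.~\eqref{eq:agedistribution2}.

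First I would write
\[
\int_0^\infty H(a,\alpha)\,d\alpha=\frac{\rho F^2}{1-F}\int_{\alpha=0}^\infty\int_{\xi=0}^{\min(a,\alpha)}\pi_0(a-\xi)\pi_0(\alpha-\xi)e^{-(\sigma+2\mu)\xi}\,d\xi\,d\alpha .
\]
Since the integrand is nonnegative, Tonelli's theorem allows swapping the order of integration. The constraint $\xi\le\min(a,\alpha)$ means $\xi$ runs over $[0,a]$ and, for fixed $\xi$, $\alpha$ runs over $[\xi,\infty)$; the inner $\alpha$-integral then equals $\int_\xi^\infty\pi_0(\alpha-\xi)\,d\alpha=\int_0^\infty\pi_0(s)\,ds=1$, because $\pi_0$ integrates to one (recall $F=\int_0^\infty\mu e^{-\mu a}\varphi(a)\,da$, noted just before eq.~\eqref{eq:agedistribution}). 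This collapses the double integral to $\tfrac{\rho F^2}{1-F}\int_0^a\pi_0(a-\xi)e^{-(\sigma+2\mu)\xi}\,d\xi$, and eq.~\eqref{eq:Fconstant} in the form $\rho F^2=(\sigma+2\mu)(1-F)$ cancels the prefactor's $F$-dependence, leaving $(\sigma+2\mu)\int_0^a\pi_0(a-\xi)e^{-(\sigma+2\mu)\xi}\,d\xi$.

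Next I would substitute $s=a-\xi$, insert $\pi_0(s)=\mu e^{-\mu s}\varphi(s)/F$ together with the explicit form~\eqref{varphi}, written $\varphi(s)=A+Be^{-(\rho F+\sigma+\mu)s}$ with $A=\tfrac{\sigma+\mu}{\rho F+\sigma+\mu}$, $B=\tfrac{\rho F}{\rho F+\sigma+\mu}$ and $A+B=1$. The remaining integral $\int_0^a\varphi(s)e^{(\sigma+\mu)s}\,ds$ splits into two elementary exponential integrals, and the identities $\tfrac{A}{\sigma+\mu}=\tfrac{B}{\rho F}=\tfrac{1}{\rho F+\sigma+\mu}$ make them combine into $\tfrac{e^{(\sigma+\mu)a}-e^{-\rho F a}}{\rho F+\sigma+\mu}$. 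Reassembling the prefactors yields
\[
\int_0^\infty H(a,\alpha)\,d\alpha=\frac{(\sigma+2\mu)\mu}{F(\rho F+\sigma+\mu)}\bigl(e^{-\mu a}-e^{-(\rho F+\sigma+2\mu)a}\bigr).
\]
On the other hand, from eq.~\eqref{eq:agedistribution2} and $1-\varphi(a)=B\bigl(1-e^{-(\rho F+\sigma+\mu)a}\bigr)$ one gets $\pi_1(a)=\tfrac{\mu\rho F}{(1-F)(\rho F+\sigma+\mu)}\bigl(e^{-\mu a}-e^{-(\rho F+\sigma+2\mu)a}\bigr)$, so the two expressions coincide precisely when $\tfrac{\sigma+2\mu}{F}=\tfrac{\rho F}{1-F}$, which is once again eq.~\eqref{eq:Fconstant}. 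This closes the argument. I do not anticipate a genuine obstacle; the only places requiring care are the bookkeeping of the integration region when applying Tonelli and using~\eqref{eq:Fconstant} at the two points where it is needed. As a sanity check, the identity also has a transparent interpretation: a uniformly chosen occupied binding site is the same object as a uniformly chosen (partnership, endpoint) pair, so the age of its owner has density equal to the marginal of the symmetric density $H$, namely $\int_0^\infty H(a,\alpha)\,d\alpha$.
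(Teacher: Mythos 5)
Your proposal is correct and follows essentially the same route as the paper: a direct computation of the marginal of $H$ via Tonelli, using that $\pi_0$ integrates to one and invoking the stationarity identity~\eqref{eq:Fconstant} to match $\pi_1$. The only (cosmetic) difference is that the paper recognizes the remaining $\xi$-integral as $1-\varphi(a)$ (via the Duhamel/probabilistic interpretation of $\varphi$) instead of evaluating the exponentials explicitly as you do; both ways the bookkeeping checks out.
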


\begin{proof}
First note that we can rewrite $H(a,\alpha)$ as
\begin{equation*}
H(a,\alpha)=\frac{\mu e^{-\mu a}}{1-F}\int_{\xi=0}^{\min{(a,\alpha)}}\rho F\varphi(a-\xi)\pi_0(\alpha-\xi)e^{-(\sigma+\mu)\xi}d\xi\added{.}
\end{equation*}
Next, one finds that 
\begin{equation}\label{lemint}
\int_{\alpha=0}^\infty\int_{\xi=0}^{\min{(a,\alpha)}}\rho F\varphi(a-\xi)\pi_0(\alpha-\xi)e^{-(\sigma+\mu)\xi}d\xi d\alpha=1-\varphi(a),
\end{equation}
\deleted{either} by direct calculations using \added{eqs.}~\eqref{varphi} and~\eqref{eq:agedistribution2} for $\varphi$ and $\pi_0$, and we conclude that \added{eq.}~\eqref{eq:probint} holds. One can also reason as follows for \added{eq.}~\eqref{lemint}: $\rho F \varphi(a-\xi)\pi_0(\alpha-\xi)e^{-(\sigma+\mu)\xi}$ is the probability that a binding site with age $a$ has a partner with age $\alpha$ and partnership duration $\xi$ given that the owner of the binding site under consideration does not die. By integrating over all possible partnership durations $0\leq\xi\leq\min(a,\alpha)$, we obtain the probability that a binding site with age $a$ has a partner with age $\alpha$ (given that the owner does not die): $\int_0^{\min(a,\alpha)} \rho F \varphi(a-\xi)\pi_0(\alpha-\xi)e^{-(\sigma+\mu)\xi}d\xi$. Then finally, by integrating over all possible $\alpha\geq0$ we obtain the probability $1-\varphi(a)$ that a binding site with age $a$ is occupied.
\end{proof}

\section{The marginal degree distribution $\B{Q_k=\sum_lP(k,l)}$}\label{app:marginal}
We obtain the probability $Q_k$ that an individual involved in a randomly chosen partnership has degree $k$ from the joint probability distribution $P(k,l)$\added{, cf.~\eqref{exact},} by summing over all $l=1,\ldots, n$, i.e.\ $Q_k=\sum_{l=1}^n P(k,l)$. On the other hand, in previous work~\cite{Leung2012} we have derived an expression for $Q_k$ from the stable degree distribution $(P_k)_k$ in the population: $Q_k=k P_k/ n(1-F)$. We show that both ways of arriving at $Q_k$ yield the same expression, i.e.\
\begin{equation}\label{eq:Qk}
Q_k=\sum_{l=1}^n P(k,l)=k P_k/ n(1-F)\added{.}
\end{equation}
First, we work out the right-hand side. $P_k$ is expressed in terms of the probability $\varphi(a)$ as follows:
\begin{equation*}
P_k=\binom nk \int_0^\infty\mu e^{-\mu a}\varphi(a)^{n-k}(1-\varphi(a))^{k}da.
\end{equation*}
On the other hand, we can simplify $\sum_{l=1}^n P(k,l)$. First of all, note that since $q_l(a)$ is a probability distribution, $\sum_{l=1}^n q_l(\alpha)=1$. Therefore
\begin{align}
\sum_{l=1}^n P(k,l)&=\sum_{l=1}^n\int_{a=0}^\infty\int_{\alpha=0}^\infty\int_0^{\min(a,\alpha)} q_k(a)q_l(\alpha)\frac{\rho F^2\pi_0(a-\xi)\pi_0(\alpha-\xi)e^{-(\sigma+2\mu)\xi}}{1-F}d\xi d\alpha da\nonumber\\
&=\int_{a=0}^\infty\int_{\alpha=0}^\infty\int_0^{\min(a,\alpha)} q_k(a)\frac{\rho F^2\pi_0(a-\xi)\pi_0(\alpha-\xi)e^{-(\sigma+2\mu)\xi}}{1-F}d\xi d\alpha da.\label{eq:simplify}
\end{align}
Next, note that we can simplify \added{eq.}~\eqref{eq:simplify} as follows:
\begin{align*}
\sum_{l=1}^n P(k,l)&=\frac{k\binom nk}{n(1-F)}\int_{a=0}^\infty\mu e^{-\mu a}\varphi(a)^{n-k}(1-\varphi(a))^k(1-\varphi(a))^{-1}\\
&\phantom{=\quad}\int_{\alpha=0}^\infty\int_{\xi=0}^{\min(a,\alpha)}\rho F^2 e^{\mu\xi}\frac{\varphi(a-\xi)}{F}\frac{\mu e^{-\mu(\alpha-\xi)}\varphi(\alpha-\xi)}{F}e^{-(\sigma+2\mu)\xi}d\xi d\alpha da\\
&=\frac{k\binom nk}{n(1-F)}\int_{a=0}^\infty\mu e^{-\mu a}\varphi(a)^{n-k}(1-\varphi(a))^k(1-\varphi(a))^{-1}\\
&\phantom{=\quad}\int_{\alpha=0}^\infty\int_0^{\min(a,\alpha)} \rho F \varphi(a-\xi)\pi_0(\alpha-\xi)e^{-(\sigma+\mu)\xi}d\xi d\alpha da\\
&=\frac{k\binom nk}{n(1-F)}\int_{a=0}^\infty\mu e^{-\mu a}\varphi(a)^{n-k}(1-\varphi(a))^k(1-\varphi(a))^{-1}(1-\varphi(a))da.
\end{align*}
Here we used \added{eq.}~\eqref{eq:probint} in the third equality. So we find that \added{eq.}~\eqref{eq:Qk} indeed holds.

\section{The correlation coefficient as a function of model parameters}\label{app:corr}
We work out \added{eq.}~\eqref{eq:corr} by \deleted{working out}\added{computing} $A$, $B$, and $C$, \deleted{where these are} defined by \added{eq.}~\eqref{eq:sums}.
\begin{align}
A&=\sum_{k=1}^n\sum_{l=1}^nk lP(k,l)=\int_{a=0}^\infty\int_{\alpha=0}^\infty\sum_{k=1}^nk q_k(a)\sum_{l=1}^n l q_l(\alpha)H(a,\alpha) dad\alpha\nonumber\\
&=\int_{a=0}^\infty\int_{\alpha=0}^\infty\big(n(1-\varphi(a))+\varphi(a)\big)\big(n(1-\varphi(\alpha))+\varphi(\alpha)\big)H(a,\alpha)dad\alpha\nonumber\\
&=\frac{1}{(1-F) (\rho F+\sigma+2 \mu)^2 (2 \rho F +3 \sigma+4 \mu ) (2 (\rho F +\sigma +\mu)+ \mu)^2}\nonumber\\
&\phantom{=\ }\Big\{\rho(\mu ^2 \left(\rho^2F^2 (n (33 n+46)+1)+2 \rho F \sigma(92 n+33) +179 \sigma ^2)\right.\nonumber\\
&\phantom{=\ }\quad\left.+4 \mu  (\rho F n +\sigma ) (4\rho ^2 F^2 n +10\rho F \sigma(n+1) +19 \sigma ^2\right)\nonumber\\
&\phantom{=\ }\quad+6 \mu ^3 (2 \rho F (8 n+3)+31 \sigma )+4 \sigma  (2 \rho F+3 \sigma ) (\rho F n +\sigma )^2+72 \mu ^4)\Big\}\label{corr1},
\end{align}
where the last equality is calculated using Mathematica.

We already calculated the mean $B=\sum_{k=1}^nkQ_k$ in~\cite[eq.~(23)]{Leung2012}. For completeness, we work it out using the probability distribution $(P(k,l))$.
\begin{align}
B&=\sum_{k=1}^nkQ_k=\sum_{k=1}^nk\sum_{l=1}^nP(k,l)\nonumber\\
&=\int_{a=0}^\infty\sum_{k=1}^nk q_k(a)\int_{\alpha=0}^\infty\sum_{l=1}^nq_l(\alpha)H(a,\alpha)d\alpha da\nonumber\\
&=\int_{a=0}^\infty\big(n(1-\varphi(a))+\varphi(a)\big)\int_{\alpha=0}^\infty H(a,\alpha)d\alpha da\nonumber\\
&=\int_{a=0}^\infty\big(n(1-\varphi(a))+\varphi(a)\big)\pi_1(a)da\nonumber\\
&=1+\frac{2\rho F(n-1)}{2(\rho F+\sigma+\mu)+\mu}.\label{corr2}
\end{align}
In the first equality we used that $(Q_k)$ is the marginal distribution of $(P(k,l))$, and in the fifth equality we used identity~\eqref{eq:probint}.

Finally, we consider the second moment $C=\sum_{k=1}^nk^2Q_k$:
\begin{align}
C&=\sum_{k=1}^nk^2Q_k=\sum_{k=1}^nk^2\sum_{l=1}^nP(k,l)\nonumber\\
&=\int_{a=0}^\infty\sum_{k=1}^nk^2 q_k(a)\int_{\alpha=0}^\infty\sum_{l=1}^nq_l(\alpha)H(a,\alpha)d\alpha da\nonumber\\
&=\int_{a=0}^\infty\big(n^2(1-\varphi(a))^2+\varphi(a)((3n-1)(1-\varphi(a))+\varphi(a))\big)\int_{\alpha=0}^\infty H(a,\alpha)d\alpha da\nonumber\\
&=\int_{a=0}^\infty\big(n^2(1-\varphi(a))^2+\varphi(a)((3n-1)(1-\varphi(a))+\varphi(a))\pi_1(a)da\nonumber\\
&=\frac{\rho F (\mu(12\mu+\rho F(24 n-7)+17\sigma)+6(\rho^2 F^2n^2+\rho F(3n-1)\sigma+\sigma^2))}{(1-F)(\rho F+\sigma+2\mu)(2(\rho F+\sigma+\mu)+\mu)(3(\rho F+\sigma+\mu)+\mu)}\added{.}\label{corr3}
\end{align}

\deleted{Putting}\added{Inserting} \added{eqs.}~\eqref{corr1},~\eqref{corr2}, \added{and}~\eqref{corr3} together in \added{eq.}~\eqref{eq:corr}, we find an explicit expression for the correlation coefficient $corr$. Note that the variance of a random variable is always nonnegative (and nonzero if \added{the random variable is} not equal to a constant). Therefore, we find that the sign of $Cov(D_u, D_v)=A-B^2$ determines the sign of the correlation coefficient $corr$ in \added{eq.}~\eqref{eq:corr}. Note that identity~\eqref{eq:Fconstant} for $F$ allows us to express $\sigma$ in terms of the other parameters: $\sigma=\rho F^2/(1-F)-2\mu$. For clarity, we use this identity for $\sigma$ in the numerator (but not in the denominator) in the simplification of $Cov(D_u, D_v)$. We find that
\begin{equation}\label{eq:covariance}
A-B^2=\frac{\mu^2\rho^3F^2(1-F)(n-1)^2}{(1-F)^2(\rho F+\sigma+2\mu)^2(2\rho F+3\sigma+4\mu)(2\rho F+2\sigma+3\mu)^2}.
\end{equation}
In particular, the covariance (and therefore the correlation coefficient $corr$) is strictly larger than zero if $\rho>0$, $\sigma>0$, $\mu>0$, and $n>1$. 

\subsubsection*{Acknowledgements}
\added{We would like to thank Pieter Trapman for opening our eyes during the Infectious Disease Dynamics meeting at the Isaac Newton Institute in Cambridge in 2013 as well as the members of the infectious disease dynamics journal clubs in Utrecht and Stockholm, and two anonymous reviewers for helpful comments.}

K.Y. Leung is supported by the Netherlands Organisation for Scientific Research (NWO) [grant Moza\"iek 017.009.082] and the Swedish Research Council [grant number 2015-05015\_3].

\setlength{\bibsep}{2pt}
\bibliographystyle{unsrt}
\bibliography{refs_network}

\end{document}